\documentclass[letterpaper, 11pt]{article}

%%%%%%%%%%%%%%%%%%%%%%%%%%%%%%%%%%%%%%%%%%%%%%%%%%%%%%%%%%%%%%%%%%%%%%%%%
\pagestyle{plain}                                                      %%
%%%%%%%%%% EXACT 1in MARGINS %%%%%%%                                   %%
\setlength{\textwidth}{6.5in}     %%                                   %%
\setlength{\oddsidemargin}{0in}   %% (It is recommended that you       %%
\setlength{\evensidemargin}{0in}  %%  not change these parameters,     %%
\setlength{\textheight}{8.5in}    %%  at the risk of having your       %%
\setlength{\topmargin}{0in}       %%  proposal dismissed on the basis  %%
\setlength{\headheight}{0in}      %%  of incorrect formatting!!!)      %%
\setlength{\headsep}{0in}         %%                                   %%
\setlength{\footskip}{.5in}       %%                                   %%
%%%%%%%%%%%%%%%%%%%%%%%%%%%%%%%%%%%%                                   %%
\setlength {\parskip}{3pt}                                             %%
\bibliographystyle{plain}                                              %%
%%%%%%%%%%%%%%%%%%%%%%%%%%%%%%%%%%%%%%%%%%%%%%%%%%%%%%%%%%%%%%%%%%%%%%%%%

%\usepackage{cite}
\usepackage{appendix}

\newcommand{\comment}[1]{}

\usepackage{graphicx}
\usepackage{latexsym}
\usepackage{mdwlist}

\usepackage{algorithm}

\newcommand{\BSB}{{\em Broadcast\_Single\_Bit~}}
\newcommand{\DG}{{\em Diag\_Graph~}}
\newcommand{\TRUE}{{\bf true~}}
\newcommand{\FALSE}{{\bf false~}}

% This is a LaTeX file
%%%%%%%%%%%%%%%%%%%%%%%%%%%%%%%%%%%%%%%%%%%%%%%%%%%%%%%%%%%%%%%%%%%%%%%%%%%%
%\makeatletter

%\newtheoremstyle{mythmstyle}% name of the style to be used
%  {3pt}% measure of space to leave above the theorem. E.g.: 3pt
%  {3pt}% measure of space to leave below the theorem. E.g.: 3pt
%  {}% name of font to use in the body of the theorem
%  {}% measure of space to indent
%  {}% name of head font
%  {}% punctuation between head and body
%  {}% space after theorem head; " " = normal interword space
%  {}% Manually specify head
%
%\theoremstyle{mythmstyle}

\newenvironment{proof}{\paragraph{\bf Proof:}}{\hspace*{\fill}\(\Box\)}
\newtheorem{theorem}{Theorem}

\newtheorem{lemma}{Lemma}

% following is used to print programs within LaTeX
% \noflash{...text...} makes a box as wide as its arg, but which is whitespace
\def\noflash#1{\setbox0=\hbox{#1}\hbox to 1\wd0{\hfill}}

\begin{document}

\title{New Efficient Error-Free Multi-Valued Consensus \\with Byzantine Failures
 \footnote{\normalsize This research is supported
in part by Army Research Office grant W-911-NF-0710287 and National
Science Foundation award 1059540. Any opinions, findings, and conclusions or recommendations expressed here are those of the authors and do not
necessarily reflect the views of the funding agencies or the U.S. government.}}

\author{Guanfeng Liang and Nitin Vaidya\\ \normalsize Department of Electrical and Computer Engineering, and\\ \normalsize Coordinated Science Laboratory\\ \normalsize University of Illinois at Urbana-Champaign\\ \normalsize gliang2@illinois.edu, nhv@illinois.edu}

\maketitle
\date{}
%\input{abstract}

%\noindent
%Guanfeng Liang is a full time Ph.D student in University of Illinois at Urbana-Champaign. This paper should be considered for the best student paper award.

\thispagestyle{empty}

\newpage

\setcounter{page}{1}
In this report, we investigate the multi-valued Byzantine consensus problem described as follows: There are $n$ processors, namely $P_1,...,P_n$, of which at most $t$ processors may be {\em faulty} and deviate from the algorithm in arbitrary fashion. Denote the set of all fault-free processors as $P_{good}$.
Each processor $P_i$ is given an $L$-bit input value $v_i$, and they want
to agree on a value $v'$ such that the following properties are satisfied:
\begin{itemize}
\item {\em Termination}: every fault-free $P_i$ eventually decides on an output value $v_i'$,
\item {\em Consistency}: the output values of all fault-free processors are equal, i.e.,
for every fault-free processor $P_i$, $v_i'=v'$ for some $v'$,
\item {\em Validity}: if every fault-free $P_i$ holds the same input $v_i=v$ for some $v$, then $v'=v$.
\end{itemize}
Algorithms that satisfy the above properties in all executions
are said to be {\bf error-free}.

The discussion in this report is not self-contained, and relies heavily on the material in \cite{podc2011_consensus} and \cite{techreport_BA_complexity} -- please refer to these papers for necessary background.

\section{A More Efficient Consensus Algorithm}
In our recent paper \cite{podc2011_consensus} we introduced an  algorithm that solves this problem error-free with communication complexity approximately $\frac{n(n-1)}{n-2t}L$, for large enough $L$. In this report, we are going to present a more efficient algorithm. The consensus algorithm in this report achieves communication complexity 
\begin{equation}
\frac{n(n-1)}{n-t}L~bits
\end{equation}
for $t<n/3$ and sufficiently large $L$.

\begin{algorithm*}[!ht]
\caption{Multi-Valued Consensus (generation $g$)}\label{alg:consensus}
\begin{enumerate*}
\item {\bf Matching Stage: }\\
In the following steps, for every processor $P_i$: $R_i[k]\leftarrow$  $S_j[k]$ whenever $P_i$ receives $S_j[k]$ from its trusted processor $P_j$.

Each processor $P_i\in P_{match}$ performs steps 1(a) and 1(b) as follows:
\begin{enumerate*}
\item Compute $(S_{i}[1],\dots,S_{i}[n]) = C_{n-t}(v_i(g))$, and {\em send} $S_{i}[i]$ to every trusted processor $P_j$.
(including those not in $P_{match}$, and $P_i$ itself.).

\item $\forall P_j$ that trusts $P_i$:\\
 If  $P_i=\min\{l|P_l\in P_{match} ~\mathrm{ and }~ P_j ~\mathrm{ trusts }~ P_l\}$, then 
$P_i$ sends $S_i[k]$ to $P_j$ for each $k$ such that $P_j$ does not trust $P_k\in P_{match}$.
\end{enumerate*}

Each processor $P_j\notin P_{match}$ performs step 1(c) as follows:
\begin{enumerate*}
\setcounter{enumii}{2}
\item Using the first $n-t$ symbols it has received in steps 1(a) and 1(b), $P_j$ computes $S_j[j]$ according to $C_{n-t}$, then sends $S_j[j]$ to all trusted processors (Including $P_j$ itself.).
\end{enumerate*}

\item {\bf Checking Stage:}\\
Each processor $P_i$ (in $P_{match}$ or not) performs Checking Stage as follows:
\begin{enumerate*}
\item If $R_{i}\in C_{n-t}$ then $Detected_i \leftarrow$ \FALSE; else  $Detected_i \leftarrow$ \TRUE.

\item If $P_i\in P_{match}$ and $R_{i} \neq S_{i}$ then $Detected_i\leftarrow$ \TRUE.
\item Broadcast $Detected_i$ using \BSB.

\item \label{step:no_detect}
Receive $Detected_j$ from each processor $P_j$ (broadcast in step 2(c)).\\
If $Detected_j=$ \FALSE for all $P_j$, decide on 
$v_i'(g) = C^{-1}_{n-t}(R_{i})$; else enter Diagnosis Stage.
\end{enumerate*}

\item {\bf Diagnosis Stage: }\\
Each processor $P_i$ (in $P_{match}$ or not) performs Diagnosis Stage as follows:
\begin{enumerate*}
\item Broadcast $S_i$ and $R_i$ using \BSB.
\item $S^\#_j\leftarrow S_j$ and $R^\#_j\leftarrow R_j$ received from $P_j$ as a result of broadcast in step 3(a).
\end{enumerate*}

Using the broadcast information, all processors perform the following steps identically:
\begin{enumerate*}
\setcounter{enumii}{2}
\item For each edge  $(i,j)$ in \DG:
Remove edge $(i,j)$ if $\exists k$, such that $P_j$ receives $S_i[k]$ from $P_i$ in Matching stage and $R^\#_j[k] \neq S^\#_i[k]$
\item For each $P_i\in P_{match}$:
If $S^\#_i\notin C_{n-t}$, then  $P_i$ must be faulty. So remove $i$ and the adjacent edges from \DG.
\item For each $P_j\notin P_{match}$:
If $S^\#_j[j]$ is not consistent with the subset of $n-t$ symbols of $R^\#_j$, from which $S^\#_j[j]$ is computed, $P_j$ must be faulty. So remove $j$ and the adjacent edges from \DG.
\item If at least $t+1$ edges at any vertex $i$ have been removed, then  $P_i$ must be faulty. So remove $i$ and the adjacent edges.

%\item $v^\#_i\leftarrow C^{-1}(S^\#_i)$ for every processor $P_i\in P_{match}$.

\item Find the maximum set of processors $P_{new} \subseteq P_{match}$  such that $S^\#_i = S^\#_j$ for every pair of $P_i,P_j\in P_{new}$. In case of a tie, pick any one.
\item If $|P_{new}| < n-t$, terminate the algorithm and decide on the default output.\\
Else, decide on $v'_i(g) = C_{n-t}^{-1}(S^\#_j)$ for any $P_j\in P_{new}$, and update $P_{match} = P_{new}$.
\end{enumerate*}
\end{enumerate*}
\end{algorithm*}

Our algorithm achieves consensus on a long value of $L$ bits deterministically. Similar to the algorithm in \cite{podc2011_consensus}, the proposed 
 algorithm  progresses in generations. Each processor $P_i$ is given an input value $v_i$ of $L$ bits, which is divided into $L/D$ parts 
of size $D$ bits each. These parts are denoted as $v_i(1), v_i(2), \cdots , v_i(L/D)$. For the $g$-th generation ($1\le g \le L/D$),
each processor $P_i$ uses $v_i(g)$ as its input in Algorithm \ref{alg:consensus}. Each generation of the algorithm results in processor $P_i$ deciding on $g$-th part (namely, $v_i'(g)$) of its final decision value $v_i'$.

The value $v_i(g)$ is represented by a vector of $n-t$ symbols,
each symbol represented with $D/(n-t)$ bits. For convenience of presentation,
we assume that $D/(n-t)$ is an integer.
We will refer to these $n-t$ symbols as the {\em data symbols}.

An $(n,n-t)$ distance-$(t+1)$ Reed-Solomon code, denoted as $C_{n-t}$, is used to encode the
$n-t$ data symbols into $n$ {\em coded symbols}.
We assume that $D/(n-t)$ is large enough to allow the above Reed-Solomon code to exist,
specifically, $n \le 2^{D/(n-t)} - 1$. This condition is met only if $L$ is large enough (since $L > D$).

In each generation $g$, a set of at least $n-t$ processors that appear to have identical inputs up to generation $g-1$ is maintained. More formally, our algorithm maintain a set $P_{match}$ of size at least $n-t$ such that for every $P_i,P_j\in P_{match}$, $v_i(h) = v_j(h)$ appears to be true for all $h<g$. $P_{match}$ is updated in every generation. Notice that, in a particular generation, if $P_{match}$ does not exist, i.e., there are at least $t+1$ processors that appear to have input values different from the other processors, it can be guarantee that the fault-free nodes do not have identical inputs. Then our algorithm will terminate and all fault-free nodes will decide on a default output.

Initially (generation 1), $P_{match}$ is the set of all $n$ processors. The  operations in each generation $g$ are presented in Algorithm \ref{alg:consensus}

\subsection{Proof of Correctness}
In this section, we prove the correctness of Algorithm \ref{alg:consensus}. In the proofs of the following lemmas, we assume that the fault-free processors always trust each other \cite{podc2011_consensus}.
\begin{lemma}\label{lm:detect}
If $Detected_j=$\FALSE for all $P_j$ in Line 2(d), all fault-free processors $P_i \in P_{good}$ decide on the identical output value $v'(g)$ such that $v'(g)=v_j(g)$ for all $P_j\in P_{good}\cap P_{match}$.
\end{lemma}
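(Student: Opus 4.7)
The plan is to show two things in sequence: first, that every fault-free processor ends up with the same receive vector $R_i$, and second, that this common vector decodes to $v_j(g)$ for any $P_j\in P_{good}\cap P_{match}$. The hypothesis $Detected_j=\FALSE$ for all $P_j$ gives us two strong handles: from step 2(a), $R_i\in C_{n-t}$ for every $P_i\in P_{good}$, and from step 2(b), $R_i=S_i=C_{n-t}(v_i(g))$ for every $P_i\in P_{good}\cap P_{match}$.

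My next step is to do a case analysis on entries $R_j[k]$ for $P_j\in P_{good}$, indexed by $P_k\in P_{good}$. Since fault-free processors are assumed to trust one another, if $P_k\in P_{good}\cap P_{match}$ then step 1(a) causes every fault-free $P_j$ to set $R_j[k]\leftarrow S_k[k]$, and crucially step 1(b) never fires on that index $k$ toward a fault-free $P_j$, because step 1(b) is only triggered when the recipient does not trust some $P_k\in P_{match}$. If instead $P_k\in P_{good}\setminus P_{match}$, then step 1(c) causes $P_k$ to send the same $S_k[k]$ to every trusted $P_j$, and again $R_j[k]$ is the same across all fault-free $P_j$. In either case, $R_i$ and $R_j$ agree on every index $k\in P_{good}$, a set of size at least $n-t$.

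Now I can invoke the code: both $R_i$ and $R_j$ are codewords of $C_{n-t}$, which has minimum distance $t+1$, so agreement on more than $n-t-1$ coordinates forces $R_i=R_j$. Denote the common codeword $R^*$; step 2(d) then makes every fault-free $P_i$ decide $v'(g)=C_{n-t}^{-1}(R^*)$. For any $P_j\in P_{good}\cap P_{match}$, the equalities $R_j=S_j=C_{n-t}(v_j(g))$ combine with $R_j=R^*$ to give $v'(g)=v_j(g)$, finishing the lemma.

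The main obstacle I expect is the careful bookkeeping in the second step, specifically verifying that $R_j[k]$ is never clobbered by a rogue step 1(b) message from a faulty processor claiming to be the minimum-indexed trusted member of $P_{match}$: this requires reading the trust-based sender-selection rule in 1(b) precisely and appealing to the convention that fault-free processors mutually trust one another. Everything after that is a straightforward distance-$(t+1)$ decoding argument.
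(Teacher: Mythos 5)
Your proposal is correct and follows essentially the same route as the paper's proof: establish that all fault-free processors agree on the coordinates of $R$ indexed by $P_{good}$ (a set of size at least $n-t$), invoke the distance-$(t+1)$ property of $C_{n-t}$ together with the fact that no $Detected_j$ was raised to conclude all fault-free $R_i$ are equal codewords, and then use Line 2(b) to identify the decoded value with $v_j(g)$ for $P_j\in P_{good}\cap P_{match}$. The only difference is that you spell out the per-index case analysis (steps 1(a)/1(b)/1(c)) that the paper compresses into one sentence, which is a welcome but not substantive refinement.
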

\begin{proof}
According to the algorithm, every fault-free processor $P_i\in P_{good}$ has sent $S_i[i]$ (computed from $v_i(g)$ directly if $P_i\in P_{match}$, or computed using symbols received in Lines 1(a) and 1(b)  if $P_i \notin P_{match}$) to all the other fault-free processors. As a result, $R_i|P_{good} = R_j|P_{good}$ is true for every pair of fault-free processors $P_i,P_j\in P_{good}$. Since $|P_{good}|\ge n-t$ and $C_{n-t}$ is a distance-($t+1$) code, it follows that either all fault-free processors $P_{good}$ decide on the same output, or at least one fault-free processor $P_i\in P_{good}$ sets $Detected_i\leftarrow$\TRUE in Line 2(a). In the case all $Detected_j$=\FALSE, all fault-free processors decide on an identical $v'(g)$. Moreover, according to Line 2(b), every fault-free processor $P_j\in P_{good}\cap P_{match}$ finds $R_j = S_j$. It then follows that $v'(g) = C_{n-t}^{-t}(R_j) =C_{n-t}^{-t}(S_j)= v_j(g)$.

\end{proof}

\begin{lemma}\label{lm:consistency}
If a $P_{new}$ such that $|P_{new}|\ge n-t$ is found in Line 3(g), all fault-free processors $P_i \in P_{good}$ decide on the identical output value $v'(g)$ such that $v'(g)=v_j(g)$ for all $P_j\in P_{good}\cap P_{new}$.
\end{lemma}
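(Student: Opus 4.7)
The plan is to split the claim into two sub-claims and dispatch each using the properties of the broadcast primitive \BSB{} and the defining properties of $P_{new}$.

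First I would establish that all fault-free processors agree on the same $v'(g)$. Note that in Step 3(a) each processor disseminates $S_i$ and $R_i$ via \BSB, which is an error-free broadcast: every fault-free processor receives identical values $S^\#_j$, $R^\#_j$ for each $P_j$, even when $P_j$ is faulty. All subsequent operations of the Diagnosis Stage (Steps 3(c)--3(h)), including the ``maximum set'' selection with its deterministic tiebreak in Step 3(g), are pure functions of these broadcast data. Hence every fault-free processor computes the same $P_{new}$, picks the same witness $P_j \in P_{new}$ to decode, and therefore produces the same decision $v'(g) = C_{n-t}^{-1}(S^\#_j)$.

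Next I would show that this common $v'(g)$ equals $v_j(g)$ for every fault-free $P_j \in P_{good} \cap P_{new}$. Fix such a $P_j$. Since $P_{new} \subseteq P_{match}$, $P_j$ is in $P_{match}$, so in Step 1(a) it honestly computed $S_j = C_{n-t}(v_j(g))$. Being fault-free, its \BSB{} broadcast in Step 3(a) is faithful, so $S^\#_j = S_j$ at every fault-free receiver. By the defining property of $P_{new}$, $S^\#_i = S^\#_j$ for every pair $P_i, P_\ell \in P_{new}$; thus whichever $P_\ell \in P_{new}$ is used in Step 3(h),
\begin{equation*}
v'(g) \;=\; C_{n-t}^{-1}(S^\#_\ell) \;=\; C_{n-t}^{-1}(S^\#_j) \;=\; C_{n-t}^{-1}(S_j) \;=\; v_j(g).
\end{equation*}

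The main obstacle, if there is one, is merely bookkeeping: one must carefully invoke (a) the error-free nature of \BSB{} both for identical delivery to all fault-free receivers and for faithful transmission from fault-free senders, and (b) the inclusion $P_{new} \subseteq P_{match}$, which is what guarantees that a fault-free member of $P_{new}$ really did encode its true input $v_j(g)$ in Step 1(a). Everything else is a direct consequence of the way $P_{new}$ is defined in Step 3(g). (Note also that since $t < n/3$ implies $n-t > 2t \geq t+1$, any $P_{new}$ of size at least $n-t$ is non-trivial, so the argument above has content whenever $P_{good} \cap P_{new} \neq \emptyset$; if that intersection is empty, the second sub-claim is vacuous and only the first is needed.)
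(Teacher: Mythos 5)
Your proof is correct and follows essentially the same route as the paper's: the paper's (much terser) argument likewise rests on the facts that all decisions in the Diagnosis Stage are deterministic functions of the \BSB-broadcast data and that the at least $n-2t\ge 1$ fault-free members of $P_{new}$ faithfully broadcast $S^\#_j = S_j = C_{n-t}(v_j(g))$, forcing the common decoded value to be $v_j(g)$. Your version merely makes explicit the two ingredients (agreement via determinism on broadcast data, validity via $P_{new}\subseteq P_{match}$) that the paper leaves implicit.
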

\begin{proof}
Since $|P_{new}|\ge n-t$ and since at most $t$  processors are faulty, there must be at least $n-2t$ fault-free processors in $P_{good}\cap P_{new}$, which have broadcast the same $S^\#$'s in Line 3(b). So at Line 3(h), all fault-free processors decide on the identical output $v'(g)=v_j(g)$ for all $P_j\in P_{good}\cap P_{new}$.
\end{proof}

\begin{lemma}\label{lm:default}
If a $P_{new}$ such that $|P_{new}|\ge n-t$ can not be found in Line 3(g), then there must be two fault-free processors $P_i,P_j\in P_{good}$ such that $v_i\neq v_j$.
\end{lemma}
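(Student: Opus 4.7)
(Proof plan.) My plan is to establish the contrapositive: if $v_i = v_j$ for every pair $P_i, P_j \in P_{good}$ (so in particular every fault-free processor holds a common value $v(h)$ in every generation $h$), then a $P_{new}$ of size at least $n-t$ must be found in Line 3(g) of every generation $h \le g$, contradicting the hypothesis at $h = g$.

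I will proceed by induction on the generation index $h$, maintaining the invariant $P_{good} \subseteq P_{match}(h)$. The base case $h = 1$ is immediate, since $P_{match}$ is initialized to the set of all $n$ processors. For the inductive step, under the invariant every fault-free $P_i \in P_{match}(h)$ computes $S_i = C_{n-t}(v(h))$ in Line 1(a) and then honestly broadcasts this value with \BSB in Line 3(a); the reliable-broadcast primitive then guarantees that every processor sees the same $S^\#_i = C_{n-t}(v(h))$ for each such $P_i$. Consequently, the ``matching class'' of $P_{match}(h)$ indexed by the codeword $C_{n-t}(v(h))$ already contains all of $P_{good}$, so its cardinality is at least $|P_{good}| \ge n-t$. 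Any competing matching class is restricted to faulty processors in $P_{match}(h)$, of which there are at most $|P_{match}(h)| - |P_{good}| \le t$; since $t < n-t$ (from $t < n/3$), the unique maximum class $P_{new}(h)$ picked in Line 3(g) is forced to be the one containing $P_{good}$. Hence $|P_{new}(h)| \ge n - t$, the algorithm does not take the default branch of Line 3(h), and the update $P_{match}(h+1) = P_{new}(h) \supseteq P_{good}$ preserves the invariant.

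Specializing the invariant to $h = g$ yields $|P_{new}(g)| \ge n - t$, contradicting the hypothesis and so completing the contrapositive. The main delicate point I anticipate is confirming that the \BSB broadcast in Line 3(a) really forces every receiver to see the same $S^\#_i$ from each fault-free sender $P_i$ regardless of adversarial interference --- this is exactly the reliable-broadcast guarantee inherited from \cite{podc2011_consensus,techreport_BA_complexity}, so I would invoke that result rather than re-prove it. Pleasantly, no analysis of the \DG edge/vertex removals in Lines 3(c)--3(f) is required: the fault-free members of $P_{match}$ alone already form a matching class large enough to meet the $n-t$ threshold.
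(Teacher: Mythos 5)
Your proof is correct and takes the same route as the paper: both argue the contrapositive, namely that if all fault-free processors hold a common input then a $P_{new}$ with $|P_{new}|\ge n-t$ is always found in Line 3(g). The paper dismisses this as ``easy to see,'' whereas you supply the actual inductive argument that the invariant $P_{good}\subseteq P_{match}$ is preserved across generations (using the reliability of \BSB and the bound $t<n-t$ to force the maximum matching class to be the one containing $P_{good}$) --- a welcome gain in rigor, but not a different approach.
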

\begin{proof}
It is easy to see that if all fault-free processors in $P_{good}$ are given the same input, then a $P_{new}$ such that $|P_{new}|\ge n-t$ can always be found in Line 3(g). Then the lemma follows.
\end{proof}

For the correctness of the way \DG is updated, please see \cite{techreport_BA_complexity} and \cite{podc2011_consensus}. Now we can conclude the correctness of Algorithm \ref{alg:consensus} as the following theorem:

\begin{theorem}\label{thm:correct}
Given $n$ processors with at most $t<n/3$ are faulty, each given an input value of $L$ bits, Algorithm \ref{alg:consensus} achieves consensus correctly in $L/D$ generations, with the diagnosis stage performed for at most $t+t(t+1)$ times.
\end{theorem}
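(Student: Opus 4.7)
The plan is to combine the three lemmas already established with a bookkeeping argument on the Diagnosis Stage to get all four claims of the theorem: Termination, Consistency, Validity (within each generation), and the global bound of $t+t(t+1)$ on the number of diagnosis invocations across all $L/D$ generations.

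First I would fix a generation $g$ and dispose of per-generation correctness. At Line 2(d) there are exactly two possibilities. If all $Detected_j=$ \FALSE, Lemma \ref{lm:detect} already gives that every fault-free $P_i$ decides on a common $v'(g)$ that furthermore coincides with $v_j(g)$ for all $P_j\in P_{good}\cap P_{match}$. Otherwise the Diagnosis Stage is entered, and at Line 3(g) either $|P_{new}|\ge n-t$ (in which case Lemma \ref{lm:consistency} delivers agreement on the common value from $P_{good}\cap P_{new}$) or $|P_{new}|<n-t$ (in which case every fault-free processor deterministically decides on the default value). Consistency therefore holds in every generation. For Validity, assume all fault-free processors hold the same $v_i=v$. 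Then Lemma \ref{lm:default} (contrapositive) guarantees that the default branch is never taken; hence by Lemmas \ref{lm:detect} and \ref{lm:consistency} and the invariant that $P_{good}\subseteq P_{match}$ (maintained across generations because fault-free processors always trust each other and produce identical broadcasts), the decided value equals $v$.

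Next I would establish the bound on diagnosis invocations, which I expect to be the main obstacle and the crux of the theorem. The strategy is to show a monotone progress measure on $(P_{match},\DG)$ that decreases with each invocation of the Diagnosis Stage, and that can decrease only boundedly many times. Concretely, each invocation that does not terminate the algorithm at Line 3(h) falls into one of two mutually exclusive cases:
\begin{enumerate*}
\item[(i)] $P_{match}$ strictly shrinks, i.e.\ at least one processor that was in $P_{match}$ is removed. Because fault-free processors cannot be evicted (as argued via the same broadcast-consistency argument used for the lemmas above, combined with the correctness of the \DG updates cited from \cite{techreport_BA_complexity,podc2011_consensus}), only faulty processors leave $P_{match}$, so this case can occur at most $t$ times over the entire execution.
\item[(ii)] $P_{match}$ is unchanged but at least one edge is removed from \DG. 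Edges between two fault-free processors are never removed, so every removed edge is incident to some faulty vertex, and once $t+1$ edges at a vertex are removed Line 3(f) deletes that vertex. Thus each of the at most $t$ faulty processors can be responsible for at most $t+1$ such edge removals, giving at most $t(t+1)$ invocations of type (ii).
\end{enumerate*}
Summing the two cases yields the stated bound of $t+t(t+1)$ on the total number of Diagnosis Stages across all $L/D$ generations. Termination of the algorithm is an immediate consequence: within any single generation the Diagnosis Stage is invoked only finitely often (at most $t+t(t+1)-\text{(invocations used so far)}$), and \BSB itself terminates, so every fault-free $P_i$ eventually decides on $v_i'(g)$ for each $g$ and hence on the full $v_i'$.

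The hard part, and what I would spend most care on, is justifying case (i)--(ii) of the progress argument, because it depends on properties of the diagnosis graph \DG (in particular that fault-free pairs never lose their edge, and that Lines 3(c)--3(f) together with the $P_{match}$ update in Line 3(h) strictly reduce $(|P_{match}|,|E(\DG)|)$ lexicographically whenever diagnosis does not terminate). These facts are inherited from the cited companion papers rather than reproved here, but the proof of Theorem \ref{thm:correct} must carefully invoke them to rule out the pathological possibility of an infinite chain of diagnosis invocations in which neither $P_{match}$ nor \DG changes.
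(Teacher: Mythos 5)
Your overall decomposition is essentially the paper's own proof: per-generation correctness follows from Lemmas~\ref{lm:detect}, \ref{lm:consistency} and \ref{lm:default} exactly as you argue, and the bound on diagnosis invocations is obtained by the same two-case count, $t$ invocations that shrink $P_{match}$ plus $t(t+1)$ invocations that remove an edge at a faulty vertex. However, your justification of case~(i) contains a genuine error. You claim that fault-free processors cannot be evicted from $P_{match}$, so that only faulty processors ever leave and hence at most $t$ shrinking invocations occur. That claim is false, and it contradicts the very scenario underlying Lemma~\ref{lm:default}: if two fault-free processors $P_i,P_j\in P_{match}$ hold different inputs in generation $g$, then $S^\#_i\neq S^\#_j$, so they cannot both belong to $P_{new}$ in Line 3(g), and at least one \emph{fault-free} processor is removed at Line 3(h). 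Eviction of fault-free processors is precisely the mechanism by which the algorithm eventually discovers that not all fault-free inputs agree and falls back to the default output. The bound of $t$ survives, but for the reason the paper actually uses: since $P_{new}\subseteq P_{match}$, the set $P_{match}$ is non-increasing, it starts with all $n$ processors, and the algorithm terminates with the default output as soon as $|P_{match}|$ would drop below $n-t$; hence at most $t$ invocations can strictly shrink $P_{match}$, irrespective of whether the evicted processors are faulty or fault-free. You should replace your argument for case~(i) with this cardinality argument.

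On the issue you flag at the end --- ruling out a diagnosis invocation that removes neither an edge nor a member of $P_{match}$ --- the paper simply asserts the dichotomy (``either at least one edge associated with a faulty processor is removed, or at least one processor is removed from $P_{match}$'') and defers the properties of \DG to the cited companion papers; your proposal is at the same level of rigor on that point, and your explicit acknowledgement of the dependence is if anything more careful than the paper's.
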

\begin{proof}
According to Lemmas \ref{lm:detect} and \ref{lm:consistency}, the decided output $v'(g)$ always equals to $v_j$ for some $P_j\in P_{good}\cap P_{match}$, unless $|P_{new}|<n-t$ in Line 3(h). So consistency and validity properties are satisfied until $|P_{new}|$ becomes $<n-t$. In the case $|P_{new}|<n-t$, according to Lemma \ref{lm:default}, there must be two fault-free processors that are given different inputs. Then it is safe to decide on a default output and terminate. So the $L$-bit output satisfies the consistency and validity properties.

Every time the diagnosis stage is performed, either at least one edge associated with a faulty processor is removed, or at least one processor is removed from $P_{match}$. So it takes at most $t(t+1)$ instances of the diagnosis stage before all faulty processors are identified. In addition, it will take at most $t$ instances to remove fault-free processors from $P_{match}$ until two fault-free processors are identified as having different inputs, and the algorithm terminates with a default output.
\end{proof}

\subsection{Complexity}
According to Theorem \ref{thm:correct}, we can compute  the communication complexity of Algorithm \ref{alg:consensus} in a  similar way as in \cite{techreport_BA_complexity} and \cite{podc2011_consensus}. With a appropriate choice of $D$, the complexity of Algorithm \ref{alg:consensus} can be made equal to 
\begin{equation}
\frac{n(n-1)}{n-t}L + O(n^4L^{0.5}).
\end{equation}
So for sufficiently large $L$ ($\Omega(n^6)$), the complexity is $O\left(nL\right)$.

\section{More Efficient $q$-validity Consensus}
In \cite{podc2011_consensus}, we also introduced an algorithm that solves consensus while satisfying the ``$q$-validity'' property, as stated below, for all $t+1\le q\le n-t$ with communication complexity $\frac{n(n-1)}{q-t}L$.

\begin{itemize}
\item $q$-Validity: If at least $q$ fault-free processors hold an identical input $v$, then the output $v'$ agreed by the fault-free processors equals input $v_j$ for some fault-free processor $P_j$. Furthermore, if $q\ge \lceil\frac{ n+1}{ 2}\rceil$, then $v'= v$.
\end{itemize}
When $q= t+1$, its complexity becomes $n(n-1)L$, which is not linear in $n$ any more. In fact, this algorithm  achieves communication complexity $O(nL)$ only when $q-t = \Omega(n)$. 

On the other hand, Algorithm \ref{alg:consensus} can achieve $q$-validity for $q\ge \lceil\frac{n+1}{2}\rceil$ with communication complexity $\frac{n(n-1)}{q}L$, if we substitute every ``$n-t$'' with ``$q$'' in the algorithm. This formulation of complexity is independent of $t$, and remains to be $O(n)$ as long as $q=\Omega(n)$. However, Algorithm \ref{alg:consensus} with the mentioned modification cannot achieve $q$-validity for any $q<\lceil\frac{n+1}{2}\rceil$.

In this section, we present an algorithm that achieves $q$-validity for all $t+1\le q\le n-t$ while keeping the complexity $O(nL)$, as long as  $q=\Omega(n)$. This algorithm uses the ``clique formation'' technique from our previous algorithm in \cite{podc2011_consensus} to achieve $q$-validity when $q$ is small, and uses the technique from Algorithm \ref{alg:consensus} presented in the previous section to improve communication complexity.

The value $v_i(g)$ is represented by a vector of $q$ data symbols,
each symbol represented with $D/q$ bits. 
An $(n,q)$ distance-$(n-q+1)$ Reed-Solomon code, denoted as $C_{q}$, is used to encode the
$q$ data symbols into $n$ coded symbols. The  operations in each generation $g$ are presented in Algorithm \ref{alg:q-validity}

%In each generation $g$, the algorithm tries to identify a set $P_{match}$ of  $q$ nodes that appear to have identical inputs in this generation. 

\begin{algorithm*}[!ht]
\caption{$q$-Validity Consensus, Matching and Checking stages (generation $g$)}\label{alg:q-validity}

\begin{enumerate*}
\item {\bf Matching Stage: }\\
In the following steps, for every processor $P_i$: $R_i[k]\leftarrow$  $S_j[k]$ whenever $P_i$ receives $S_j[k]$ from its trusted processor $P_j$.

Every processor $P_i$ performs steps 1(a) to 1(e) as follows:
\begin{enumerate*}
\item Compute $(S_{i}[1],\dots,S_{i}[n]) = C_{q}(v_i(g))$, and {\em send} $S_{i}[i]$ to every trusted processor $P_j$.

%\item \label{step:send_S}
%$
%R_{i}[j] \leftarrow \left\{ 
%\begin{array}{l}
%\textrm{symbol that $P_i$ receives from $P_j$, if $P_i$ trusts $P_j$;}\\
%\perp, \textrm{otherwise}
%\end{array}
%\right.
%$

\item \label{step:M}
If $S_{i}[j] = R_{i}[j]$ then $M_{i}[j] \leftarrow $ \TRUE; else $M_{i}[j] \leftarrow $ \FALSE
\item $P_i$ broadcasts the vector $M_i$ using \BSB
\end{enumerate*}
Using the received $M$ vectors:
\begin{enumerate*}
\setcounter{enumii}{3}
\item Find a set of processors $P_{match}$ of size $q$ such that \\
	\hspace*{0.3in} $M_{j}[k]=M_{k}[j]=$ \TRUE for every pair of $P_j,P_k\in P_{match}$. If multiple possibility exist for $P_{match}$, then any one of the possible
  sets is chosen arbitrarily as $P_{match}$ (all fault-free nodes choose a deterministic algorithm to select identical $P_{match}$).
\item If $P_{match}$ does not exist, then decide on a default value and continue to the next generation;\\ else continue to the following steps.

~

{\bf Note:} At this point, if $P_{match}$ does not exist, it is, in fact, safe to terminate the algorithm with a default output since it can be asserted that no $q$ fault-free nodes have identical inputs. However, by continuing to the next generation instead of terminating, $q$-validity is satisfied for the inputs of each individual generation.

\end{enumerate*}

When $P_{match}$ of size $q$ is found, each processor $P_i\in P_{match}$ performs step 1(g) as follows:
\begin{enumerate*}
\setcounter{enumii}{5}
\item $\forall P_j$ that trusts $P_i$:\\
 If  $i=\min\{l|P_l\in P_{match} ~\mathrm{ and }~ P_j ~\mathrm{ trusts }~ P_l\}$, then 
$P_i$ sends $S_i[k]$ to $P_j$ for each $k$ such that $P_j$ does not trust $P_k$.
\end{enumerate*}

Each processor $P_j\notin P_{match}$ performs step 1(g) as follows:
\begin{enumerate*}
\setcounter{enumii}{6}
\item Using the first $q$ symbols it has received from the processors in $P_{match}$ in steps 1(a) and 1(f), $P_j$ computes $S_j[j]$ according to $C_{q}$, then sends $S_j[j]$ to all trusted processors.

~

{\bf Note:} For every processor $P_i$ trusted by $P_j$, it has set $R_i[j]$ to the $S_j[j]$ received from $P_j$ in step 1(a). It will be replaced with the new $S_j[j]$ received in step 1(g).
\end{enumerate*}

\item {\bf Checking Stage:}\\
Each processor $P_i$ (in $P_{match}$ or not) performs Checking Stage as follows:
\begin{enumerate*}
\item If $R_{i}\in C_{q}$ then $Detected_i \leftarrow$ \FALSE; else  $Detected_i \leftarrow$ \TRUE.

\item If $P_i\in P_{match}$ and $R_{i} \neq S_{i}$ then $Detected_i\leftarrow$ \TRUE.
\item Broadcast $Detected_i$ using \BSB.

\item 
Receive $Detected_j$ from each processor $P_j$ (broadcast in step 2(c)).\\
If $Detected_j=$ \FALSE for all $P_j$, then decide on 
$v_i'(g) = C^{-1}_{q}(R_{i})$; else enter Diagnosis Stage
\end{enumerate*}
\end{enumerate*}
\end{algorithm*}

\begin{algorithm*}
\setcounter{algorithm}{1}
\caption{$q$-Validity Consensus, Diagnosis stage (generation $g$)}\label{alg:q-validity2}
\begin{enumerate*}
\setcounter{enumi}{2}
\item {\bf Diagnosis Stage: }\\
Each processor $P_i$ (in $P_{match}$ or not) performs Diagnosis Stage as follows:
\begin{enumerate*}
\item Broadcast $S_i$ and $R_i$ using \BSB.
\item $S^\#_j\leftarrow S_j$ and $R^\#_j\leftarrow R_j$ received from $P_j$ as a result of broadcast in step 3(a).
\end{enumerate*}

Using the broadcast information, all processors perform the following steps identically:
\begin{enumerate*}
\setcounter{enumii}{2}
\item For each edge  $(i,j)$ in \DG:
Remove edge $(i,j)$ if $\exists k$, such that $P_j$ receives $S_i[k]$ from $P_i$ in Matching stage and $R^\#_j[k] \neq S^\#_i[k]$.
\item For each $P_i\in P_{match}$:
If $S^\#_i\notin C_{q}$, then  $P_i$ must be faulty. So remove $i$ and the adjacent edges from \DG.
\item For each $P_j\notin P_{match}$:
If $S^\#_j[j]$ is not consistent with the subset of $q$ symbols of $R^\#_j|P_{match}$, from which $S^\#_j[j]$ is computed, $P_j$ must be faulty. So remove $j$ and the adjacent edges from \DG.
\item If at least $t+1$ edges at any vertex $i$ have been removed, then  $P_i$ must be faulty. So remove $i$ and the adjacent edges.

%\item $v^\#_i\leftarrow C^{-1}(S^\#_i)$ for every processor $P_i\in P_{match}$.

\item Find a set of processors $P_{decide} \subseteq P$  such that $S^\#_i = S^\#_j$ for every pair of $P_i,P_j\in P_{decide}$. In case of a tie, pick any one.
\item If $|P_{decide}| < q$, decide on the default output.\\
Else, decide on $v'_i(g) = C_{q}^{-1}(S^\#_j)$ for any $P_j\in P_{decide}$.
\end{enumerate*}
\end{enumerate*}
\end{algorithm*}

\comment{++++++++++++++++++++++++++++++++++ old q-validity 
\begin{algorithm*}[!ht]
\caption{$q$-Validity Consensus (generation $g$)}\label{alg:q-validity-old}
\begin{enumerate*}
\item {\bf Matching Stage: }\\
Each processor $P_i$ performs the matching stage as follows:
\begin{enumerate*}
\item Compute $(S_{i}[1],\dots,S_{i}[n]) = C_{q-t}(v_i(g))$, and {\em send} $S_{i}[i]$ to every trusted processor  $P_j$
\item 
$
R_{i}[j] \leftarrow \left\{ 
\begin{array}{l}
\textrm{symbol that $P_i$ receives from $P_j$, if $P_i$ trusts $P_j$;}\\
\perp, \textrm{otherwise}
\end{array}
\right.
$

%$R_{i,j} \leftarrow$ symbol ($S_{j,j}$) that $P_i$ receives from every $P_j$ that it trusts; $\perp$ for other $P_j$'s
\item 
If $S_{i}[j] = R_{i}[j]$ then $M_{i}[j] \leftarrow $ \TRUE; else $M_{i}[j] \leftarrow $ \FALSE
\item $P_i$ broadcasts the vector $M_i$ using \BSB
\end{enumerate*}
Using the received $M$ vectors:
\begin{enumerate*}
\setcounter{enumii}{4}
\item Find a set of processors $P_{match}$ of size $q$ such that \\
	\hspace*{0.3in} $M_{j}[k]=M_{k}[j]=$ \TRUE for every pair of $P_j,P_k\in P_{match}$
\item If $P_{match}$ does not exist, then decide on a default value and terminate;\\ else enter the Checking Stage
\end{enumerate*}

\item {\bf Checking Stage:}\\
Each processor $P_j\notin P_{match}$ performs steps 2(a) and 2(b):
\begin{enumerate*}
\item If $R_{j}|P_{match}\in C_{q-t}$ then $Detected_j \leftarrow$ \FALSE; else  $Detected_j \leftarrow$ \TRUE.
\item Broadcast $Detected_j$ using \BSB.
\end{enumerate*}
Each processor $P_i$ performs step 2(c):
\begin{enumerate*}
\setcounter{enumii}{2}
\item 
Receive $Detected_j$ from each processor $P_j\notin P_{match}$ (broadcast in step 2(b)).\\
If $Detected_j=$ \FALSE for all $P_j\notin P_{match}$, then decide on $v_i'(g) = C^{-1}_{n-2t}(R_{i}|P_{match})$; \\ else enter Diagnosis Stage
\end{enumerate*}

\item {\bf Diagnosis Stage: }\\
Each processor $P_j\in P_{match}$ performs step 3(a):
\begin{enumerate*}
\item Broadcast $S_{j}[j]$ using \BSB
% \\ (one instance of \BSB is needed for each bit of $S_j[j]$) 
\end{enumerate*}
Each processor $P_i$ performs the following steps:
\begin{enumerate*}
\setcounter{enumii}{1}
\item $R^\#[j]\leftarrow$ symbol received from $P_j\in P_{match}$ as a result of broadcast in step 3(a)
\item For all $P_j\in P_{match}$,\\
\hspace*{0.3in} if $P_i$ trusts $P_j$ and $R_i[j]= R^\#[j]$ then  $Trust_i[j]\leftarrow $ \TRUE;\\ \hspace*{0.3in} else $Trust_i[j]\leftarrow $ \FALSE
\item Broadcast $Trust_i|P_{match}$ using \BSB

\item \label{step:remove_edge} For each edge $(j,k)$ in \DG, such that $P_j\in P_{match}$ \\ \hspace*{0.3in} remove edge $(j,k)$ {\bf if}
$Trust_j[k]$ = \FALSE or $Trust_k[j]$ = \FALSE
\item \label{step:remove_false_detect}
If $R^\#|P_{match}\in C_{n-2t}$ then \\
\hspace*{0.3in} if for any $P_j\notin P_{match}$, \\
\hspace*{0.6in} $Detected_j =$ \TRUE, but no edge at vertex $j$ was removed in step 3(e) \\
\hspace*{0.3in} then remove all edges at vertex $j$ in \DG
\item 
If at least $t+1$ edges at any vertex $j$ have been removed so far,\\ then processor $P_j$ 
must be faulty, and all edges at $j$ are removed.
\item Repeat generation $g$ with the updated \DG

\end{enumerate*}
\end{enumerate*}
\end{algorithm*}
++++++++++++++++++++++++++++}

\subsection{Proof of Correctness}
\begin{lemma}
If there are a set of at least $q$ fault-free processors $Q\subseteq P_{good}$ such that for each $P_i\in Q$, $v_i(g) = v(g)$ for some $v(g)$, then a set $P_{match}$ of size $q$ necessarily exists. 
\end{lemma}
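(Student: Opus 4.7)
The plan is to exhibit $Q$ itself (or any $q$-element subset of it) as a witness satisfying the mutual-agreement condition in step 1(d), namely $M_j[k]=M_k[j]=\mathbf{true}$ for every pair in the set. Since $P_{match}$ is chosen from any such witness set, existence of the witness is exactly what needs to be proved.

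First I would unpack what $M_j[k]=\mathbf{true}$ requires: by step 1(b), it holds iff $S_j[k]=R_j[k]$, where $R_j[k]$ is the symbol $P_j$ receives from $P_k$ in step 1(a), assuming $P_j$ trusts $P_k$. Now fix any pair $P_j,P_k\in Q$. Both processors hold the common input $v(g)$, so they compute the same Reed--Solomon codeword: $S_j = C_q(v(g)) = S_k$. In particular $S_j[k] = S_k[k]$.

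Next I would invoke the standing assumption (stated just before the lemma block) that fault-free processors trust each other, so $P_j$ does trust $P_k$ and therefore sets $R_j[k]$ to the symbol actually received from $P_k$. Since $P_k$ is fault-free, it honestly sends $S_k[k]$ in step 1(a), so $R_j[k]=S_k[k]=S_j[k]$, giving $M_j[k]=\mathbf{true}$. By symmetry $M_k[j]=\mathbf{true}$. This applies to every ordered pair in $Q$, so the full set $Q$ satisfies the mutual matching condition; since $|Q|\ge q$, any $q$-element subset of $Q$ is a valid candidate for $P_{match}$, establishing existence.

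I don't anticipate a real obstacle here; the only thing to be careful about is the bookkeeping around the ``trust'' relation and the promise that BSB delivers $M_j$ faithfully from every fault-free $P_j$, so that all fault-free nodes agree that the witness exists and the deterministic tie-breaker in step 1(d) produces an identical $P_{match}$. Since the lemma claims only existence of $P_{match}$, the agreement aspect is not strictly required, but I would mention it in passing to foreshadow its use in later lemmas.
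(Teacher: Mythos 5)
Your proposal is correct and follows essentially the same route as the paper: both arguments observe that all processors in $Q$ compute the identical codeword $C_q(v(g))$, trust each other and deliver symbols honestly, so $R_j[k]=S_k[k]=S_j[k]$ gives $M_j[k]=M_k[j]=$ \TRUE for every pair in $Q$, and hence $Q$ witnesses the existence of a $P_{match}$ of size $q$. The extra remark about all fault-free nodes agreeing on the same $P_{match}$ is a harmless (and accurate) aside that the paper defers to the algorithm description.
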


\begin{proof}
Since all the fault-free processors in $Q$ have identical input $v(g)$, $S_i = C_{q}(v(g))$ for all $P_i\in Q$. Since these processors are fault-free and always trust each other, they send each other correct messages in the matching stage. Thus, $R_i[j]=S_j[j]=S_i[j]$ for all $P_i,P_j\in Q$. This fact implies that $M_i[j]=M_j[i]=$\TRUE for all $P_i,P_j\in Q$. Since there are $|Q|\ge q$ fault-free processors in $Q$, it follows that a set $P_{match}$ of size $q$ must exist.
\end{proof}

\comment{
\begin{lemma}
All processors in $P_{match}\cap P_{good}$ have identical input in generation $g$.
\end{lemma}
\begin{proof}
$|P_{match}\cap P_{good}|\ge |P_{match}|-t = q-t$ since there are at most $t$ faulty processors. Consider any two processors $P_i,P_j\in P_{match}\cap P_{good}$. Since $M_i[j]=M_j[i] =$\TRUE, it follows that $S_i[i]=S_j[i]$ and $S_j[j]=S_i[j]$. Since there are $q-t$ fault-free processors in $P_{match}\cap P_{good}$, this implies that the codewords computed by these fault-free processors (in Line 1(a)) contain at least $q-t$ identical symbols. Since the code $C_{q-t}$ has dimension $(q-t)$, this implies that the fault-free processors in $P_{match}\cap P_{good}$ must have identical input in generation $g$.
\end{proof}

}

\begin{lemma}
If $Detected_j=$\FALSE for all $P_j$ in Line 2(d), all fault-free processors $P_i\in P_{good}$ decide on the identical output value $v'(g)$ such that $v'(g) = v_j(g)$ for all $P_j\in P_{match}\cap P_{good}$.
\end{lemma}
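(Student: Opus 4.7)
The plan is to follow the same outline as the proof of Lemma~\ref{lm:detect}, but with the code $C_{n-t}$ replaced by $C_q$ and the size of $P_{match}$ being $q$ rather than $n-t$. The key ingredients are (i) the fact that fault-free processors always trust each other and deliver identical symbols to all fault-free recipients, and (ii) the minimum distance of $C_q$ combined with $|P_{good}|\ge n-t\ge q$.

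First, I would argue that every fault-free $P_i$ transmits one and the same symbol $S_i[i]$ to every other fault-free processor: if $P_i\in P_{match}$ this happens in Step~1(a), and if $P_i\notin P_{match}$ the ``final'' $S_i[i]$ is sent in Step~1(g) (and overwrites the one from Step~1(a), per the Note). In either case the value sent by a fault-free $P_i$ is identical for all fault-free recipients, so for every pair $P_i,P_k\in P_{good}$ we have $R_i[j]=R_k[j]=S_j[j]$ for all $P_j\in P_{good}$. Hence $R_i|P_{good}=R_k|P_{good}$.

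Next, the hypothesis that $Detected_j=\FALSE$ for all $P_j$ lets me invoke Line~2(a) to conclude $R_i\in C_q$ for every $P_i\in P_{good}$. Since $C_q$ has minimum distance $n-q+1$, any two distinct codewords agree on at most $q-1$ positions. But $R_i$ and $R_k$ agree on at least $|P_{good}|\ge n-t\ge q$ positions, so they must be equal. Consequently every fault-free processor computes the same $v'(g)=C_q^{-1}(R_i)$. Finally, Line~2(b) would have flipped $Detected_j$ to \TRUE whenever $P_j\in P_{match}$ and $R_j\ne S_j$; since $Detected_j=\FALSE$, each $P_j\in P_{match}\cap P_{good}$ has $R_j=S_j$, so $v'(g)=C_q^{-1}(R_j)=C_q^{-1}(S_j)=v_j(g)$, as claimed.

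I do not expect any serious obstacle here: the only place where care is needed is the Reed--Solomon agreement step, and this goes through precisely because the lemma lives in the regime $t+1\le q\le n-t$ stated for the $q$-validity algorithm. The rest of the argument is a direct transcription of Lemma~\ref{lm:detect} with $n-t$ replaced by $q$.
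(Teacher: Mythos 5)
Your proof is correct and follows essentially the same route as the paper's: fault-free processors agree on the restriction $R_i|P_{good}$, the hypothesis forces every $R_i$ into $C_q$, and the code's parameters together with $|P_{good}|\ge n-t\ge q$ force all these codewords to coincide, after which Line 2(b) pins the common value to $v_j(g)$ for $P_j\in P_{match}\cap P_{good}$. Your version is in fact slightly more explicit than the paper's, since you spell out the minimum-distance argument (distinct codewords of $C_q$ agree in at most $q-1$ positions) where the paper only cites the dimension of the code.
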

\begin{proof}
Observe that size of set $P_{match}\cap P_{good}$ is at least $q-t\ge 1$, so there must be at least one fault-free processor in $P_{match}$.
% and hence the decoding operations $C_{q}^{-1}(R_i|P_{match})$ and $C_{q-t}^{-1}(R_i|P_{match}\cap P_{good})$ are both defined.

%Since fault-free processors send correct messages, for all fault-free processors $P_i\in P_{good}$, $R_i|P_{good}$ are identical. Since $Detected_j=$\FALSE for all $P_j$, every $P_i\in P_{good}$ decides on $C_{q}^{-1}(R_i)$ as its output. Also, $C_{q}^{-1}(R_i) = C_{q}^{-1}(R_i|P_{match})$, since $C_{q}$ has dimension $(q)$ and $|P_{match}|=q$. It then follows that all the fault-free processors $P_i$ decide on the identical value $v'(g) = C_{q}^{-1}(R_i|P_{match}$ in Line 2(c). Since $R_j|P_{match}\cap P_{good} = S_j|P_{match}\cap P_{good}$ for all processors $P_j\in P_{match}\cap P_{good}$, $v'(g) = v_j(g)$ for all $P_j\in P_{match}\cap P_{good}$.

According to the algorithm, every fault-free processor $P_i\in P_{good}$ has sent $S_i[i]$ (computed from $v_i(g)$ directly if $P_i\in P_{match}$, or computed using the $q$ symbols received from $P_{match}$ in Lines 1(a) and 1(f)  if $P_i \notin P_{match}$) to all the other fault-free processors. As a result, $R_i|P_{good} = R_j|P_{good}$ is true for every pair of fault-free processors $P_i,P_j\in P_{good}$. Since $|P_{good}|\ge n-t\ge q$ and $C_{q}$ has dimension $q$, it follows that either all fault-free processors $P_{good}$ decide on the same output, or at least one fault-free processor $P_i\in P_{good}$ sets $Detected_i\leftarrow$\TRUE in Line 2(a). In the case $Detected_j$=\FALSE for all $P_j$, all fault-free processors decide on an identical $v'(g)$. Moreover, according to Line 2(b), every fault-free processor $P_j\in P_{good}\cap P_{match}$ finds $R_j = S_j$. It then follows that $v'(g) = C_{q}^{-t}(R_j) =C_{q}^{-t}(S_j)= v_j(g)$ where $P_j \in P_{good}\cap P_{match}$.

\end{proof}

Then we can have the following theorem about the correctness of Algorithm \ref{alg:q-validity}.
\begin{theorem}
Given $n$ processors with at most $t<n/3$ are faulty, each given an input value of $L$ bits, Algorithm \ref{alg:q-validity} achieves $q$-validity for each one of the $L/D$ generations, with the diagnosis stage performed for at most $t(t+1)$ times.
\end{theorem}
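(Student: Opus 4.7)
The plan is to prove $q$-validity per generation together with the diagnosis-stage bound, following the template of Theorem~\ref{thm:correct} but adapted to Algorithm~\ref{alg:q-validity}. I will split the per-generation analysis into three control-flow cases: (i) no $P_{match}$ of size $q$ is found in Line 1(d) and the default is chosen; (ii) $P_{match}$ is found and every $Detected_j = $ \FALSE in Line 2(d); and (iii) the Diagnosis Stage is entered. All fault-free nodes take the same branch since the tests are based on broadcast information, so consistency reduces to showing that each branch produces a common value.

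Case (i) is handled by the first lemma of this section: if at least $q$ fault-free processors share an input, then a $P_{match}$ of size $q$ necessarily exists, so falling through to the default cannot violate any $q$-validity premise. Case (ii) is exactly the second lemma above, which yields $v'(g) = v_j(g)$ for any $P_j \in P_{good} \cap P_{match}$; such a $P_j$ exists because $|P_{match} \cap P_{good}| \ge q - t \ge 1$. For case (iii), I plan to argue that if $|P_{decide}| \ge q$ in Line 3(g), then $|P_{decide} \cap P_{good}| \ge q - t \ge 1$, so the reconstructed $v'(g) = C_q^{-1}(S^\#_j)$ equals the honest $v_j(g)$ for some fault-free $P_j \in P_{decide}$, and all fault-free processors pick the same $P_{decide}$ deterministically from the commonly received $S^\#$'s. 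The sub-case $|P_{decide}| < q$ is disposed of exactly like case (i). For the stronger guarantee $v' = v$ when $q \ge \lceil (n+1)/2 \rceil$, a pigeonhole count $|P_{decide} \cap Q| \ge |P_{decide}| + |Q| - n \ge 1$ (where $Q \subseteq P_{good}$ is the block of size $\ge q$ holding $v$) ensures that the fault-free processor whose $S^\#$ is decoded actually holds $v$.

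For the diagnosis-stage count, I plan to reuse the edge/vertex-removal potential argument from Theorem~\ref{thm:correct} and \cite{techreport_BA_complexity,podc2011_consensus}: each invocation removes from \DG{} at least one edge or vertex incident to a faulty processor, and once $t+1$ edges at a vertex are removed, the vertex itself is pruned in Line 3(f). Charging at most $t+1$ invocations to each of the at most $t$ faulty processors yields the $t(t+1)$ bound. The extra $+\,t$ term in Theorem~\ref{thm:correct} does not reappear here because $P_{match}$ is reconstructed from scratch in each generation, so there is no ``fault-free processor removed from $P_{match}$'' mode to account for across generations.

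The step I expect to be hardest is the $|P_{decide}| \ge q$ sub-case of (iii) when the faulty processors collude and broadcast a fabricated common $S^\#$: I must verify that Lines 3(c)-3(f), combined with the distance-$(n - q + 1)$ property of $C_q$, are strong enough to force any such sham codeword either to be the honest codeword of some fault-free $P_j \in P_{decide}$, or to trigger enough edge removals at faulty vertices that $|P_{decide}| < q$ and the default branch is taken. I intend to lean on the \DG{}-update correctness results from \cite{techreport_BA_complexity,podc2011_consensus}, so the local task reduces to checking that the Reed-Solomon parameter $q$ is matched to the consistency tests used in the diagnosis stage.
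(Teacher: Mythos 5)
Your proposal is correct and is essentially the same approach as the paper's: the paper's own proof is literally the single line ``Similar to Theorem~\ref{thm:correct},'' and your case split (no $P_{match}$ found / all $Detected_j=$ \FALSE / diagnosis stage), your reliance on the two lemmas of this section, your pigeonhole count for the $q\ge\lceil\frac{n+1}{2}\rceil$ sub-claim, and your explanation of why the extra $+t$ term of Theorem~\ref{thm:correct} disappears (since $P_{match}$ is re-formed from the broadcast $M$ vectors each generation) are exactly the expansion that line points to. The only remark worth adding is that the step you flag as hardest is easier than you fear: since $q\ge t+1$, any $P_{decide}$ with $|P_{decide}|\ge q$ already contains a fault-free processor, and because all members of $P_{decide}$ broadcast identical $S^\#$'s, the common codeword is that honest processor's --- the same counting as in Lemma~\ref{lm:consistency} --- so no appeal to the distance-$(n-q+1)$ property or the \DG{}updates is needed for that sub-case.
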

\begin{proof}
Similar to Theorem \ref{thm:correct}.
\end{proof}

\subsection{Complexity}
In Lines 1(a) and 1(f), every processor receives at most $n-1$ symbols, so at most $n(n-1)$ symbols are communicated in these two steps. In Line 1(g), every processor $P_j\notin P_{match}$ sends at most $n-1$ symbols, and there are at most $n-q$ processors not in $P_{match}$, so at most $(n-q)(n-1)$ symbols are communicated in this step. So in total, no more than $(2n-q)(n-1)$ symbols are communicated in the Matching stage. Then with an appropriate choice of $D$, the complexity of Algorithm \ref{alg:q-validity} can be made to 
\begin{equation}
\le \frac{(2n-q)(n-1)}{q}L + O(n^4L^{0.5}).
\end{equation}
So for any $q=\Omega(n)$ and $t+1\le q\le n-t$, with a sufficiently large $L$ ($\Omega(n^6)$), the complexity is $O(nL)$.

%\newpage
%\bibliographystyle{abbrv}
\bibliography{../PaperList}

\end{document}